\newtheorem{theorem}{Theorem}
\theoremstyle{plain}
\newtheorem{axiom}{Axiom}
\newtheorem{conjecture}{Conjecture}
\newtheorem{corollary}{Corollary}
\newtheorem*{thmUN}{Theorem}
\newtheorem*{propUN}{Proposition}
\newtheorem{lemma}{Lemma}
\newtheorem{proposition}{Proposition}
\theoremstyle{definition}
\newtheorem{definition}{Definition}
\newtheorem{example}{Example}
\newtheorem{exercise}{Exercise}
\theoremstyle{definition}
\newtheorem{remark}{Remark}[section]
\newcolumntype{M}[1]{>{\centering\arraybackslash}m{#1}}
\chardef\@x10\chardef\@xv60
\def\tcitime{
\def\@time{%
  \@minute\time\@hour\@minute\divide\@hour\@xv
  \ifnum\@hour<\@x 0\fi\the\@hour:%
  \multiply\@hour\@xv\advance\@minute-\@hour
  \ifnum\@minute<\@x 0\fi\the\@minute
  }}%
\def\QCTOpt[#1]#2{%
  \def\QCTOptB{#1}
  \def\QCTOptA{#2}
}
\def\QCTNOpt#1{%
  \def\QCTOptA{#1}
  \let\QCTOptB\empty
}
\def\Qct{%
  \@ifnextchar[{%
    \QCTOpt}{\QCTNOpt}
}
\def\QCBOpt[#1]#2{%
  \def\QCBOptB{#1}
  \def\QCBOptA{#2}
}
\def\QCBNOpt#1{%
  \def\QCBOptA{#1}
  \let\QCBOptB\empty
}
\def\Qcb{%
  \@ifnextchar[{%
    \QCBOpt}{\QCBNOpt}
}
\def\PrepCapArgs{%
  \ifx\QCBOptA\empty
    \ifx\QCTOptA\empty
      {}%
    \else
      \ifx\QCTOptB\empty
        {\QCTOptA}%
      \else
        [\QCTOptB]{\QCTOptA}%
      \fi
    \fi
  \else
    \ifx\QCBOptA\empty
      {}%
    \else
      \ifx\QCBOptB\empty
        {\QCBOptA}%
      \else
        [\QCBOptB]{\QCBOptA}%
      \fi
    \fi
  \fi
}
\def\GRAPHICSPS#1{%
 \ifcase\GRAPHICSTYPE
   \special{ps: #1}%
 \or
   \special{language "PS", include "#1"}%
 \fi
}%
\def\graffile#1#2#3#4{%
    \bgroup
    \leavevmode
    \@ifundefined{bbl@deactivate}{\def~{\string~}}{\activesoff}
    \raise -#4 \BOXTHEFRAME{%
        \hbox to #2{\raise #3\hbox to #2{\null #1\hfil}}}%
    \egroup
}%
\def\draftbox#1#2#3#4{%
 \leavevmode\raise -#4 \hbox{%
  \frame{\rlap{\protect\tiny #1}\hbox to #2%
   {\vrule height#3 width\z@ depth\z@\hfil}%
  }%
 }%
}%
\newif\ifwasdraft
\def\GRAPHIC#1#2#3#4#5{%
 \ifnum\draft=\@ne\draftbox{#2}{#3}{#4}{#5}%
  \else\graffile{#1}{#3}{#4}{#5}%
  \fi
 }%
\def\addtoLaTeXparams#1{%
    \edef\LaTeXparams{\LaTeXparams #1}}%
\newif\ifBoxFrame \BoxFramefalse
\newif\ifOverFrame \OverFramefalse
\newif\ifUnderFrame \UnderFramefalse
\def\BOXTHEFRAME#1{%
   \hbox{%
      \ifBoxFrame
         \frame{#1}%
      \else
         {#1}%
      \fi
   }%
}
\def\doFRAMEparams#1{\BoxFramefalse\OverFramefalse\UnderFramefalse\readFRAMEparams#1\end}%
\def\readFRAMEparams#1{%
 \ifx#1\end%
  \let\next=\relax
  \else
  \ifx#1i\dispkind=\z@\fi
  \ifx#1d\dispkind=\@ne\fi
  \ifx#1f\dispkind=\tw@\fi
  \ifx#1t\addtoLaTeXparams{t}\fi
  \ifx#1b\addtoLaTeXparams{b}\fi
  \ifx#1p\addtoLaTeXparams{p}\fi
  \ifx#1h\addtoLaTeXparams{h}\fi
  \ifx#1X\BoxFrametrue\fi
  \ifx#1O\OverFrametrue\fi
  \ifx#1U\UnderFrametrue\fi
  \ifx#1w
    \ifnum\draft=1\wasdrafttrue\else\wasdraftfalse\fi
    \draft=\@ne
  \fi
  \let\next=\readFRAMEparams
  \fi
 \next
 }%
\def\IFRAME#1#2#3#4#5#6{%
      \bgroup
      \let\QCTOptA\empty
      \let\QCTOptB\empty
      \let\QCBOptA\empty
      \let\QCBOptB\empty
      #6%
      \parindent=0pt%
      \leftskip=0pt
      \rightskip=0pt
      \setbox0 = \hbox{\QCBOptA}%
      \@tempdima = #1\relax
      \ifOverFrame
          \typeout{This is not implemented yet}%
          \show\HELP
      \else
         \ifdim\wd0>\@tempdima
            \advance\@tempdima by \@tempdima
            \ifdim\wd0 >\@tempdima
               \textwidth=\@tempdima
               \setbox1 =\vbox{%
                  \noindent\hbox to \@tempdima{\hfill\GRAPHIC{#5}{#4}{#1}{#2}{#3}\hfill}\\%
                  \noindent\hbox to \@tempdima{\parbox[b]{\@tempdima}{\QCBOptA}}%
               }%
               \wd1=\@tempdima
            \else
               \textwidth=\wd0
               \setbox1 =\vbox{%
                 \noindent\hbox to \wd0{\hfill\GRAPHIC{#5}{#4}{#1}{#2}{#3}\hfill}\\%
                 \noindent\hbox{\QCBOptA}%
               }%
               \wd1=\wd0
            \fi
         \else
            \ifdim\wd0>0pt
              \hsize=\@tempdima
              \setbox1 =\vbox{%
                \unskip\GRAPHIC{#5}{#4}{#1}{#2}{0pt}%
                \break
                \unskip\hbox to \@tempdima{\hfill \QCBOptA\hfill}%
              }%
              \wd1=\@tempdima
           \else
              \hsize=\@tempdima
              \setbox1 =\vbox{%
                \unskip\GRAPHIC{#5}{#4}{#1}{#2}{0pt}%
              }%
              \wd1=\@tempdima
           \fi
         \fi
         \@tempdimb=\ht1
         \advance\@tempdimb by \dp1
         \advance\@tempdimb by -#2%
         \advance\@tempdimb by #3%
         \leavevmode
         \raise -\@tempdimb \hbox{\box1}%
      \fi
      \egroup%
}%
\def\DFRAME#1#2#3#4#5{%
 \begin{center}
     \let\QCTOptA\empty
     \let\QCTOptB\empty
     \let\QCBOptA\empty
     \let\QCBOptB\empty
     \ifOverFrame 
        #5\QCTOptA\par
     \fi
     \GRAPHIC{#4}{#3}{#1}{#2}{\z@}
     \ifUnderFrame 
        \nobreak\par\nobreak#5\QCBOptA
     \fi
 \end{center}%
 }%
\def\FFRAME#1#2#3#4#5#6#7{%
 \begin{figure}[#1]%
  \let\QCTOptA\empty
  \let\QCTOptB\empty
  \let\QCBOptA\empty
  \let\QCBOptB\empty
  \ifOverFrame
    #4
    \ifx\QCTOptA\empty
    \else
      \ifx\QCTOptB\empty
        \caption{\QCTOptA}%
      \else
        \caption[\QCTOptB]{\QCTOptA}%
      \fi
    \fi
    \ifUnderFrame\else
      \label{#5}%
    \fi
  \else
    \UnderFrametrue%
  \fi
  \begin{center}\GRAPHIC{#7}{#6}{#2}{#3}{\z@}\end{center}%
  \ifUnderFrame
    #4
    \ifx\QCBOptA\empty
      \caption{}%
    \else
      \ifx\QCBOptB\empty
        \caption{\QCBOptA}%
      \else
        \caption[\QCBOptB]{\QCBOptA}%
      \fi
    \fi
    \label{#5}%
  \fi
  \end{figure}%
 }%
\def\makeactives{
  \catcode`\"=\active
  \catcode`\;=\active
  \catcode`\:=\active
  \catcode`\'=\active
  \catcode`\~=\active
}
   \gdef\activesoff{%
      \def"{\string"}
      \def;{\string;}
      \def:{\string:}
      \def'{\string'}
      \def~{\string~}
    }
\def\FRAME#1#2#3#4#5#6#7#8{%
 \bgroup
 \ifnum\draft=\@ne
   \wasdrafttrue
 \else
   \wasdraftfalse%
 \fi
 \def\LaTeXparams{}%
 \dispkind=\z@
 \def\LaTeXparams{}%
 \doFRAMEparams{#1}%
 \ifnum\dispkind=\z@\IFRAME{#2}{#3}{#4}{#7}{#8}{#5}\else
  \ifnum\dispkind=\@ne\DFRAME{#2}{#3}{#7}{#8}{#5}\else
   \ifnum\dispkind=\tw@
    \edef\@tempa{\noexpand\FFRAME{\LaTeXparams}}%
    \@tempa{#2}{#3}{#5}{#6}{#7}{#8}%
    \fi
   \fi
  \fi
  \ifwasdraft\draft=1\else\draft=0\fi{}%
  \egroup
 }%
\def\TEXUX#1{"texux"}
\long\def\QQQ#1#2{%
     \long\expandafter\def\csname#1\endcsname{#2}}%
\long\def\QQA#1#2{}%
\def\QTR#1#2{{\csname#1\endcsname #2}}
\def\EXPAND#1[#2]#3{}%
\def\NOEXPAND#1[#2]#3{}%
\def\LaTeXparent#1{}%
\def\ChildStyles#1{}%
\def\ChildDefaults#1{}%
\def\QTagDef#1#2#3{}%
  \providecommand{\UNICODE}[2][]{}
\def\QQfnmark#1{\footnotemark}
 \def\abstract{%
  \if@twocolumn
   \section*{Abstract (Not appropriate in this style!)}%
   \else \small 
   \begin{center}{\bf Abstract\vspace{-.5em}\vspace{\z@}}\end{center}%
   \quotation 
   \fi
  }%
   \def\registered{\relax\ifmmode{}\r@gistered
                    \else$\m@th\r@gistered$\fi}%
 \def\r@gistered{^{\ooalign
  {\hfil\raise.07ex\hbox{$\scriptstyle\rm\text{R}$}\hfil\crcr
  \mathhexbox20D}}}}{}%
\newdimen\theight
\def\Column{%
 \vadjust{\setbox\z@=\hbox{\scriptsize\quad\quad tcol}%
  \theight=\ht\z@\advance\theight by \dp\z@\advance\theight by \lineskip
  \kern -\theight \vbox to \theight{%
   \rightline{\rlap{\box\z@}}%
   \vss
   }%
  }%
 }%
\def\qed{%
 \ifhmode\unskip\nobreak\fi\ifmmode\ifinner\else\hskip5\p@\fi\fi
 \hbox{\hskip5\p@\vrule width4\p@ height6\p@ depth1.5\p@\hskip\p@}%
 }%
\def\miss{\hbox{\vrule height2\p@ width 2\p@ depth\z@}}%
\def\tcol#1{{\baselineskip=6\p@ \vcenter{#1}} \Column}  %
\def\newfmtname{LaTeX2e}
  \DeclareOldFontCommand{\rm}{\normalfont\rmfamily}{\mathrm}
  \DeclareOldFontCommand{\sf}{\normalfont\sffamily}{\mathsf}
  \DeclareOldFontCommand{\tt}{\normalfont\ttfamily}{\mathtt}
  \DeclareOldFontCommand{\bf}{\normalfont\bfseries}{\mathbf}
  \DeclareOldFontCommand{\it}{\normalfont\itshape}{\mathit}
  \DeclareOldFontCommand{\sl}{\normalfont\slshape}{\@nomath\sl}
  \DeclareOldFontCommand{\sc}{\normalfont\scshape}{\@nomath\sc}
\def\alpha{{\Greekmath 010B}}%
\def\beta{{\Greekmath 010C}}%
\def\gamma{{\Greekmath 010D}}%
\def\delta{{\Greekmath 010E}}%
\def\epsilon{{\Greekmath 010F}}%
\def\zeta{{\Greekmath 0110}}%
\def\eta{{\Greekmath 0111}}%
\def\theta{{\Greekmath 0112}}%
\def\iota{{\Greekmath 0113}}%
\def\kappa{{\Greekmath 0114}}%
\def\lambda{{\Greekmath 0115}}%
\def\mu{{\Greekmath 0116}}%
\def\nu{{\Greekmath 0117}}%
\def\xi{{\Greekmath 0118}}%
\def\pi{{\Greekmath 0119}}%
\def\rho{{\Greekmath 011A}}%
\def\sigma{{\Greekmath 011B}}%
\def\tau{{\Greekmath 011C}}%
\def\upsilon{{\Greekmath 011D}}%
\def\phi{{\Greekmath 011E}}%
\def\chi{{\Greekmath 011F}}%
\def\psi{{\Greekmath 0120}}%
\def\omega{{\Greekmath 0121}}%
\def\varepsilon{{\Greekmath 0122}}%
\def\vartheta{{\Greekmath 0123}}%
\def\varpi{{\Greekmath 0124}}%
\def\varrho{{\Greekmath 0125}}%
\def\varsigma{{\Greekmath 0126}}%
\def\varphi{{\Greekmath 0127}}%
\def\nabla{{\Greekmath 0272}}
\def\FindBoldGroup{%
   {\setbox0=\hbox{$\mathbf{x\global\edef\theboldgroup{\the\mathgroup}}$}}%
}
\def\Greekmath#1#2#3#4{%
    \if@compatibility
        \ifnum\mathgroup=\symbold
           \mathchoice{\mbox{\boldmath$\displaystyle\mathchar"#1#2#3#4$}}%
                      {\mbox{\boldmath$\textstyle\mathchar"#1#2#3#4$}}%
                      {\mbox{\boldmath$\scriptstyle\mathchar"#1#2#3#4$}}%
                      {\mbox{\boldmath$\scriptscriptstyle\mathchar"#1#2#3#4$}}%
        \else
           \mathchar"#1#2#3#4%
        \fi 
    \else 
        \FindBoldGroup
        \ifnum\mathgroup=\theboldgroup 
           \mathchoice{\mbox{\boldmath$\displaystyle\mathchar"#1#2#3#4$}}%
                      {\mbox{\boldmath$\textstyle\mathchar"#1#2#3#4$}}%
                      {\mbox{\boldmath$\scriptstyle\mathchar"#1#2#3#4$}}%
                      {\mbox{\boldmath$\scriptscriptstyle\mathchar"#1#2#3#4$}}%
        \else
           \mathchar"#1#2#3#4%
        \fi     	    
	  \fi}
\newif\ifGreekBold  \GreekBoldfalse
\let\SAVEPBF=\pbf
\def\pbf{\GreekBoldtrue\SAVEPBF}%
  \newcounter{equationnumber}  
  \def\mathletters{%
     \addtocounter{equation}{1}
     \edef\@currentlabel{\theequation}%
     \setcounter{equationnumber}{\c@equation}
     \setcounter{equation}{0}%
     \edef\theequation{\@currentlabel\noexpand\alph{equation}}%
  }
    \def\BibTeX{{\rm B\kern-.05em{\sc i\kern-.025em b}\kern-.08em
                 T\kern-.1667em\lower.7ex\hbox{E}\kern-.125emX}}}{}%
\def\AmS{{\protect\usefont{OMS}{cmsy}{m}{n}%
                A\kern-.1667em\lower.5ex\hbox{M}\kern-.125emS}}}{}%
\def\@@eqncr{\let\@tempa\relax
    \ifcase\@eqcnt \def\@tempa{& & &}\or \def\@tempa{& &}%
      \else \def\@tempa{&}\fi
     \@tempa
     \if@eqnsw
        \iftag@
           \@taggnum
        \else
           \@eqnnum\stepcounter{equation}%
        \fi
     \fi
     \global\tag@false
     \global\@eqnswtrue
     \global\@eqcnt\z@\cr}
\def\TCItag{\@ifnextchar*{\@TCItagstar}{\@TCItag}}
\def\@TCItag#1{%
    \global\tag@true
    \global\def\@taggnum{(#1)}}
\def\@TCItagstar*#1{%
    \global\tag@true
    \global\def\@taggnum{#1}}
\let\DOTSI\relax
\def\RIfM@{\relax\ifmmode}%
\def\FN@{\futurelet\next}%
\def\iint{\DOTSI\intno@\tw@\FN@\ints@}%
\def\iiint{\DOTSI\intno@\thr@@\FN@\ints@}%
\def\iiiint{\DOTSI\intno@4 \FN@\ints@}%
\def\idotsint{\DOTSI\intno@\z@\FN@\ints@}%
\def\ints@{\findlimits@\ints@@}%
\newif\iflimtoken@
\newif\iflimits@
\def\findlimits@{\limtoken@true\ifx\next\limits\limits@true
 \else\ifx\next\nolimits\limits@false\else
 \limtoken@false\ifx\ilimits@\nolimits\limits@false\else
 \ifinner\limits@false\else\limits@true\fi\fi\fi\fi}%
\def\multint@{\int\ifnum\intno@=\z@\intdots@                          
 \else\intkern@\fi                                                    
 \ifnum\intno@>\tw@\int\intkern@\fi                                   
 \ifnum\intno@>\thr@@\int\intkern@\fi                                 
 \int}
\def\multintlimits@{\intop\ifnum\intno@=\z@\intdots@\else\intkern@\fi
 \ifnum\intno@>\tw@\intop\intkern@\fi
 \ifnum\intno@>\thr@@\intop\intkern@\fi\intop}%
\def\intic@{%
    \mathchoice{\hskip.5em}{\hskip.4em}{\hskip.4em}{\hskip.4em}}%
\def\negintic@{\mathchoice
 {\hskip-.5em}{\hskip-.4em}{\hskip-.4em}{\hskip-.4em}}%
\def\ints@@{\iflimtoken@                                              
 \def\ints@@@{\iflimits@\negintic@
   \mathop{\intic@\multintlimits@}\limits                             
  \else\multint@\nolimits\fi                                          
  \eat@}
 \else                                                                
 \def\ints@@@{\iflimits@\negintic@
  \mathop{\intic@\multintlimits@}\limits\else
  \multint@\nolimits\fi}\fi\ints@@@}%
\def\intkern@{\mathchoice{\!\!\!}{\!\!}{\!\!}{\!\!}}%
\def\plaincdots@{\mathinner{\cdotp\cdotp\cdotp}}%
\def\intdots@{\mathchoice{\plaincdots@}%
 {{\cdotp}\mkern1.5mu{\cdotp}\mkern1.5mu{\cdotp}}%
 {{\cdotp}\mkern1mu{\cdotp}\mkern1mu{\cdotp}}%
 {{\cdotp}\mkern1mu{\cdotp}\mkern1mu{\cdotp}}}%
\def\RIfM@{\relax\protect\ifmmode}
\def\text{\RIfM@\expandafter\text@\else\expandafter\mbox\fi}
\let\nfss@text\text
\def\text@#1{\mathchoice
   {\textdef@\displaystyle\f@size{#1}}%
   {\textdef@\textstyle\tf@size{\firstchoice@false #1}}%
   {\textdef@\textstyle\sf@size{\firstchoice@false #1}}%
   {\textdef@\textstyle \ssf@size{\firstchoice@false #1}}%
   \glb@settings}
\def\textdef@#1#2#3{\hbox{{%
                    \everymath{#1}%
                    \let\f@size#2\selectfont
                    #3}}}
\newif\iffirstchoice@
\def\Let@{\relax\iffalse{\fi\let\\=\cr\iffalse}\fi}%
\def\vspace@{\def\vspace##1{\crcr\noalign{\vskip##1\relax}}}%
\def\multilimits@{\bgroup\vspace@\Let@
 \baselineskip\fontdimen10 \scriptfont\tw@
 \advance\baselineskip\fontdimen12 \scriptfont\tw@
 \lineskip\thr@@\fontdimen8 \scriptfont\thr@@
 \lineskiplimit\lineskip
 \vbox\bgroup\ialign\bgroup\hfil$\m@th\scriptstyle{##}$\hfil\crcr}%
\def\Sb{_\multilimits@}%
\def\endSb{\crcr\egroup\egroup\egroup}%
\def\Sp{^\multilimits@}%
\newdimen\ex@
\def\rightarrowfill@#1{$#1\m@th\mathord-\mkern-6mu\cleaders
 \hbox{$#1\mkern-2mu\mathord-\mkern-2mu$}\hfill
 \mkern-6mu\mathord\rightarrow$}%
\def\leftarrowfill@#1{$#1\m@th\mathord\leftarrow\mkern-6mu\cleaders
 \hbox{$#1\mkern-2mu\mathord-\mkern-2mu$}\hfill\mkern-6mu\mathord-$}%
\def\leftrightarrowfill@#1{$#1\m@th\mathord\leftarrow
\mkern-6mu\cleaders
 \hbox{$#1\mkern-2mu\mathord-\mkern-2mu$}\hfill
 \mkern-6mu\mathord\rightarrow$}%
\def\overrightarrow{\mathpalette\overrightarrow@}%
\def\overrightarrow@#1#2{\vbox{\ialign{##\crcr\rightarrowfill@#1\crcr
 \noalign{\kern-\ex@\nointerlineskip}$\m@th\hfil#1#2\hfil$\crcr}}}%
\def\overleftarrow{\mathpalette\overleftarrow@}%
\def\overleftarrow@#1#2{\vbox{\ialign{##\crcr\leftarrowfill@#1\crcr
 \noalign{\kern-\ex@\nointerlineskip}$\m@th\hfil#1#2\hfil$\crcr}}}%
\def\overleftrightarrow{\mathpalette\overleftrightarrow@}%
\def\overleftrightarrow@#1#2{\vbox{\ialign{##\crcr
   \leftrightarrowfill@#1\crcr
 \noalign{\kern-\ex@\nointerlineskip}$\m@th\hfil#1#2\hfil$\crcr}}}%
\def\underrightarrow{\mathpalette\underrightarrow@}%
\def\underrightarrow@#1#2{\vtop{\ialign{##\crcr$\m@th\hfil#1#2\hfil
  $\crcr\noalign{\nointerlineskip}\rightarrowfill@#1\crcr}}}%
\def\underleftarrow{\mathpalette\underleftarrow@}%
\def\underleftarrow@#1#2{\vtop{\ialign{##\crcr$\m@th\hfil#1#2\hfil
  $\crcr\noalign{\nointerlineskip}\leftarrowfill@#1\crcr}}}%
\def\underleftrightarrow{\mathpalette\underleftrightarrow@}%
\def\underleftrightarrow@#1#2{\vtop{\ialign{##\crcr$\m@th
  \hfil#1#2\hfil$\crcr
 \noalign{\nointerlineskip}\leftrightarrowfill@#1\crcr}}}%
\def\qopnamewl@#1{\mathop{\operator@font#1}\nlimits@}
\let\nlimits@\displaylimits
\def\setboxz@h{\setbox\z@\hbox}
\def\varlim@#1#2{\mathop{\vtop{\ialign{##\crcr
 \hfil$#1\m@th\operator@font lim$\hfil\crcr
 \noalign{\nointerlineskip}#2#1\crcr
 \noalign{\nointerlineskip\kern-\ex@}\crcr}}}}
 \def\rightarrowfill@#1{\m@th\setboxz@h{$#1-$}\ht\z@\z@
  $#1\copy\z@\mkern-6mu\cleaders
  \hbox{$#1\mkern-2mu\box\z@\mkern-2mu$}\hfill
  \mkern-6mu\mathord\rightarrow$}
\def\leftarrowfill@#1{\m@th\setboxz@h{$#1-$}\ht\z@\z@
  $#1\mathord\leftarrow\mkern-6mu\cleaders
  \hbox{$#1\mkern-2mu\copy\z@\mkern-2mu$}\hfill
  \mkern-6mu\box\z@$}
\def\projlim{\qopnamewl@{proj\,lim}}
\def\injlim{\qopnamewl@{inj\,lim}}
\def\varinjlim{\mathpalette\varlim@\rightarrowfill@}
\def\varprojlim{\mathpalette\varlim@\leftarrowfill@}
\def\varliminf{\mathpalette\varliminf@{}}
\def\varliminf@#1{\mathop{\underline{\vrule\@depth.2\ex@\@width\z@
   \hbox{$#1\m@th\operator@font lim$}}}}
\def\varlimsup{\mathpalette\varlimsup@{}}
\def\varlimsup@#1{\mathop{\overline
  {\hbox{$#1\m@th\operator@font lim$}}}}
\def\align{\@verbatim \frenchspacing\@vobeyspaces \@alignverbatim
You are using the "align" environment in a style in which it is not defined.}
\let\csname endalign*\endcsname =\endtrivlist
\def\alignat{\@verbatim \frenchspacing\@vobeyspaces \@alignatverbatim
You are using the "alignat" environment in a style in which it is not defined.}
\let\csname endalignat*\endcsname =\endtrivlist
\def\xalignat{\@verbatim \frenchspacing\@vobeyspaces \@xalignatverbatim
You are using the "xalignat" environment in a style in which it is not defined.}
\let\csname endxalignat*\endcsname =\endtrivlist
\def\gather{\@verbatim \frenchspacing\@vobeyspaces \@gatherverbatim
You are using the "gather" environment in a style in which it is not defined.}
\let\csname endgather*\endcsname =\endtrivlist
\def\multiline{\@verbatim \frenchspacing\@vobeyspaces \@multilineverbatim
You are using the "multiline" environment in a style in which it is not defined.}
\let\csname endmultiline*\endcsname =\endtrivlist
\def\arrax{\@verbatim \frenchspacing\@vobeyspaces \@arraxverbatim
You are using a type of "array" construct that is only allowed in AmS-LaTeX.}
\def\tabulax{\@verbatim \frenchspacing\@vobeyspaces \@tabulaxverbatim
You are using a type of "tabular" construct that is only allowed in AmS-LaTeX.}
\let\csname endarrax*\endcsname =\endtrivlist
\let\csname endtabulax*\endcsname =\endtrivlist
 \def\endequation{%
     \ifmmode\ifinner 
      \iftag@
        \addtocounter{equation}{-1} 
        $\hfil
           \displaywidth\linewidth\@taggnum\egroup \endtrivlist
        \global\tag@false
        \global\@ignoretrue   
      \else
        $\hfil
           \displaywidth\linewidth\@eqnnum\egroup \endtrivlist
        \global\tag@false
        \global\@ignoretrue 
      \fi
     \else   
      \iftag@
        \addtocounter{equation}{-1} 
        \eqno \hbox{\@taggnum}
        \global\tag@false%
        $$\global\@ignoretrue
      \else
        \eqno \hbox{\@eqnnum}
        $$\global\@ignoretrue
      \fi
     \fi\fi
 } 
 \newif\iftag@ \tag@false
 \def\TCItag{\@ifnextchar*{\@TCItagstar}{\@TCItag}}
 \def\@TCItag#1{%
     \global\tag@true
     \global\def\@taggnum{(#1)}}
 \def\@TCItagstar*#1{%
     \global\tag@true
     \global\def\@taggnum{#1}}
     \def\tag{\@ifnextchar*{\@tagstar}{\@tag}}
     \def\@tag#1{%
         \global\tag@true
         \global\def\@taggnum{(#1)}}
     \def\@tagstar*#1{%
         \global\tag@true
         \global\def\@taggnum{#1}}
\begin{document}
\def\sym#1{\ifmmode^{#1}\else\(^{#1}\)\fi}

\title{Fritz John's equation in mechanism design}
\author{Alfred Galichon{$^{\S }$}}
\date{October 5, 2020. Funding from NSF grant DMS-1716489, as well as ERC grant
CoG-866274 are acknowledged. The author benefited from insightful
discussions with Guillaume Carlier, Deniz Dizdar, and Benny Moldovanu.\\
{\indent$^{\S }$New York University, departments of economics and
mathematics and Sciences Po, department of economics; ag133@nyu.edu}}

\begin{abstract}
We show the role that an important equation first studied by Fritz John
plays in mechanism design.

\textbf{Dedicated to Nicholas Yannelis on his 65th birthday.}

\vspace{0.1cm} \emph{Keywords:} implementability, mechanism design, John's
equation, Kevin Roberts' theorem

\vspace{0.1cm}
\end{abstract}

\maketitle

\pagebreak

A large part of the literature on mechanism design deals with
implementability in dominant strategy. Let us recall the basic result in the
single-agent case, following Rochet (1987) and McAfee and McMillan (1988),
and as exposited in Chapter 4.4 of Vohra (2011). Assume $x\in \mathbb{R}^{d}$
is the type reported by the agent, and $z\in \mathbb{R}^{d}$ is the outcome
selected by the mechanism. The mechanism specified an allocation rule $T:%
\mathbb{R}^{d}\rightarrow \mathbb{R}^{d}$ and a payment rule $\pi :\mathbb{R}%
^{d}\rightarrow \mathbb{R}$. If the agent announces $x$, the outcome $%
z=T\left( x\right) $ is selected, while the agent is asked to pay $\pi
\left( x\right) $. It is assumed that if the agent is of type $x$, if
outcome $z$ is selected, and if the payment is $\pi $, the agent's utility
is $x^{\top }z-\pi $. The mechanism is called implementable in dominant
strategy (or simply implementable) if reporting her true type is the agent's
dominant strategy; an allocation rule $T$ is called implementable in
dominant strategy if there exists a payment rule $\pi $ such that the
mechanism $(T,\pi )$ is implementable. This happens if
\begin{equation*}
x^{\top }T\left( x\right) -\pi \left( x\right) \geq x^{\top }T\left(
x^{\prime }\right) -\pi \left( x^{\prime }\right) ~\forall x^{\prime }\in
\mathbb{R}^{d}.
\end{equation*}

Denoting $V\left( x\right) =\max_{x^{\prime }\in \mathbb{R}^{d}}\left\{
x^{\top }T\left( x^{\prime }\right) -\pi \left( x^{\prime }\right) \right\} $%
, this will be the case when $T\left( x\right) $ is in the subdifferential
of $V\left( x\right) $, or when $T$ is continuous, when $T\left( x\right)
=\nabla u\left( x\right) $.

Hence the following result due to Rochet (1987) and McAfee and McMillan
(1988):

\begin{thmUN}[Implementation theorem]
In the single-agent case, a continuous allocation rule $T:\mathbb{R}%
^{d}\rightarrow \mathbb{R}^{d}$ is implementable in dominant strategy if and
only if $T\left( x\right) =\nabla V\left( x\right) $ for some convex
function $V$.
\end{thmUN}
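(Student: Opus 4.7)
The plan is to treat the two implications separately. For the sufficiency direction, I would start from a convex $V$ with $T = \nabla V$, guess the compensating payment rule $\pi(x) = x^{\top} \nabla V(x) - V(x)$, and verify implementability directly. For an agent of type $x$ reporting $x'$, the payoff is $x^{\top} T(x') - \pi(x') = V(x') + (x - x')^{\top} \nabla V(x')$, and the convexity inequality $V(x) \geq V(x') + (x - x')^{\top} \nabla V(x')$ shows this payoff is bounded above by $V(x)$, with equality at $x' = x$. So truth-telling is dominant.

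For the necessity direction, suppose $(T,\pi)$ is implementable, and define the indirect utility
\begin{equation*}
V(x) = \sup_{x' \in \mathbb{R}^{d}} \bigl\{ x^{\top} T(x') - \pi(x') \bigr\}.
\end{equation*}
First, $V$ is convex as a pointwise supremum of affine functions of $x$. Second, implementability says this supremum is attained at $x' = x$, so $V(x) = x^{\top} T(x) - \pi(x)$ and the incentive inequality rewrites as $V(y) \geq y^{\top} T(x) - \pi(x) = V(x) + (y-x)^{\top} T(x)$ for every $y$. This is exactly the statement that $T(x) \in \partial V(x)$, the subdifferential of $V$ at $x$.

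The crucial remaining step, which I expect to be the main obstacle, is to promote this selection from the subdifferential into a genuine gradient using the continuity of $T$. I would argue by contradiction: if some $p \in \partial V(x_{0})$ with $p \neq T(x_{0})$, test it against a straight-line perturbation $x_{t} = x_{0} + t(p - T(x_{0}))$. Monotonicity of the subdifferential of a convex function gives $(T(x_{t}) - p)^{\top}(x_{t} - x_{0}) \geq 0$, i.e.\ $(T(x_{t}) - p)^{\top}(p - T(x_{0})) \geq 0$. Letting $t \to 0^{+}$ and using continuity of $T$, the left-hand side tends to $-|p - T(x_{0})|^{2}$, contradicting $p \neq T(x_{0})$. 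Hence $\partial V(x) = \{T(x)\}$ is a singleton for every $x$, which is equivalent to $V$ being differentiable at $x$ with $\nabla V(x) = T(x)$, completing the proof.
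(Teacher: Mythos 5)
Your proof is correct and follows the same route the paper itself sketches: define the indirect utility $V(x)=\sup_{x'}\{x^{\top}T(x')-\pi(x')\}$, observe it is convex with $T(x)\in\partial V(x)$, and use continuity of $T$ to upgrade the subdifferential selection to a gradient. The paper leaves that last step as an assertion, whereas you supply the missing detail (the subdifferential-monotonicity argument showing $\partial V(x)$ is a singleton) as well as the explicit payment rule $\pi(x)=x^{\top}\nabla V(x)-V(x)$ for the converse; both are correct.
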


\bigskip

The purpose of this note is to investigate the multi-agent case. Assume that
the space of types of each agent is still $\mathbb{R}^{d}$, and denote $x\in
\mathbb{R}^{d}$ the type of the first agent and $y\in \mathbb{R}^{d}$ the
type of the second agent. The outcome $z$ is still an element of $\mathbb{R}%
^{d}$, and the allocation rule is now a map $T:\mathbb{R}^{d}\times \mathbb{R%
}^{d}\rightarrow \mathbb{R}^{d}$, where $z=T\left( x,y\right) $ is the
outcome selected if agent 1 announces type $x$ and agent 2 announces type $y$%
. The payment by agent 1 is $\pi _{1}\left( x,y\right) $ while the payment
by agent 2 is $\pi _{2}\left( x,y\right) $. Denoting $V_{1}\left( x,y\right)
=\max_{x^{\prime }\in \mathbb{R}^{d}}\left\{ x^{\top }T\left( x^{\prime
},y\right) -\pi _{1}\left( x^{\prime },y\right) \right\} $ and $V_{2}\left(
x,y\right) =\max_{y^{\prime }\in \mathbb{R}^{d}}\left\{ y^{\top }T\left(
x,y^{\prime }\right) -\pi _{2}\left( x,y^{\prime }\right) \right\} $, it is
easy to adapt the previous theorem to show that in the two-agent case, a
continuous allocation rule $T:\mathbb{R}^{d}\times \mathbb{R}^{d}\rightarrow
\mathbb{R}^{d}$ is implementable in dominant strategy if and only if $%
T\left( x,y\right) =\nabla _{x}V_{1}\left( x,y\right) $ for some function $%
V_{1}\left( x,y\right) $ which is convex in $x$ for all $y$, and $T\left(
x,y\right) =\nabla _{y}V_{2}\left( x,y\right) $ for some function $%
V_{2}\left( x,y\right) $ which is convex in $y$ for all $x$.

\bigskip

The main result in this note is the following statement:

\begin{propUN}
Consider a smooth allocation rule $T:\mathbb{R}^{d}\times \mathbb{R}%
^{d}\rightarrow \mathbb{R}^{d}$, and assume it is implementable. Then $%
T\left( x,y\right) =\nabla _{x}V_{1}\left( x,y\right) $ where $V_{1}$
satisfies \emph{Fritz John's equation}%
\begin{equation}
\frac{\partial ^{2}V_{1}\left( x,y\right) }{\partial x_{i}\partial y_{j}}=%
\frac{\partial ^{2}V_{1}\left( x,y\right) }{\partial x_{j}\partial y_{i}}%
,1\leq i,j\leq d  \label{JohnEqn}
\end{equation}%
and in addition, the resulting symmetric matrix is semidefinite positive.
Similarly, $T\left( x,y\right) =\nabla _{y}V_{2}\left( x,y\right) $ where $%
V_{2}$ satisfies the same restrictions.
\end{propUN}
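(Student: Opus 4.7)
The plan is to use the two-agent implementation result stated just above the proposition: an implementable smooth allocation rule admits two simultaneous representations, namely $T(x,y)=\nabla_x V_1(x,y)$ with $V_1$ convex in $x$, and $T(x,y)=\nabla_y V_2(x,y)$ with $V_2$ convex in $y$. The existence of these two representations is the only fact I intend to use; everything else is differentiation.

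First I would write out the componentwise identity obtained by equating the two representations of $T$, which gives
\begin{equation*}
\frac{\partial V_1(x,y)}{\partial x_i} \;=\; T_i(x,y) \;=\; \frac{\partial V_2(x,y)}{\partial y_i}, \qquad 1\le i\le d.
\end{equation*}
Differentiating the cross-partial of $V_1$ appearing in \eqref{JohnEqn} and substituting this identity, I get
\begin{equation*}
\frac{\partial^{2}V_{1}(x,y)}{\partial x_{i}\partial y_{j}} \;=\; \frac{\partial T_i(x,y)}{\partial y_j} \;=\; \frac{\partial^2 V_2(x,y)}{\partial y_j\,\partial y_i}.
\end{equation*}
Since $V_2$ is smooth, the Hessian of $V_2$ with respect to $y$ is symmetric by Schwarz's theorem, so the right-hand side is symmetric in $(i,j)$. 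This gives Fritz John's equation \eqref{JohnEqn} for $V_1$.

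For the positive semidefiniteness claim, I would observe that the same chain of equalities identifies the symmetric matrix $\bigl(\partial^2 V_1/\partial x_i\partial y_j\bigr)_{ij}$ with the Hessian $D_{yy}^2 V_2(x,y)$. Since the two-agent implementation theorem guarantees that $V_2(x,y)$ is convex in $y$ for every fixed $x$, this Hessian is positive semidefinite, which is exactly the desired conclusion. The symmetric statement for $V_2$ with respect to $y$ is obtained by swapping the roles of the two agents and repeating the same three-line argument; the PSD matrix is then the Hessian $D_{xx}^2 V_1$, which is PSD because $V_1$ is convex in $x$.

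The only subtlety, and hence the main (modest) obstacle, is conceptual rather than technical: one must recognize that the two convexity-based representations of the same map $T$ force an algebraic identity between mixed derivatives of $V_1$ and same-variable derivatives of $V_2$. Once that identity is noted, both John's equation and the semidefiniteness are immediate consequences of Schwarz's theorem and of the convexity of $V_2$ in $y$ (respectively of $V_1$ in $x$).
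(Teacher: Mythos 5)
Your proposal is correct and follows essentially the same route as the paper's own proof: both identify $\partial^{2}V_{1}/\partial x_{i}\partial y_{j}=\partial T^{i}/\partial y_{j}=\partial^{2}V_{2}/\partial y_{i}\partial y_{j}$ from the two gradient representations, then invoke symmetry of the Hessian of $V_{2}$ in $y$ and convexity of $V_{2}$ in $y$ (and symmetrically, convexity of $V_{1}$ in $x$ for the second claim). No substantive differences.
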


\begin{proof}
If $T$ is implementable, then $T\left( x,y\right) =\nabla _{x}V_{1}\left(
x,y\right) $, where $V_{1}\left( x,y\right) $ is convex in $x$ for all $y$
and $T\left( x,y\right) =\nabla _{y}V_{2}\left( x,y\right) $ where $%
V_{2}\left( x,y\right) $ is convex in $y$ for all $x$. Because $T^{i}\left(
x,y\right) =\partial V_{2}\left( x,y\right) /\partial y_{i}$, one has $%
\partial T^{i}\left( x,y\right) /\partial y_{j}=\partial ^{2}V_{2}\left(
x,y\right) /\partial y_{i}\partial y_{j}$, and hence $\left( \partial
T^{i}\left( x,y\right) /\partial y_{j}\right) _{ij}$ is symmetric
semi-definite positive. But because $T$ is also a gradient with respect ot $%
x $, one has $T^{i}\left( x,y\right) =\partial V_{1}\left( x,y\right)
/\partial x_{i}$, and thus
\begin{equation*}
\frac{\partial ^{2}V_{1}}{\partial x_{i}\partial y_{j}}\left( x,y\right) =%
\frac{\partial T^{i}}{\partial y_{j}}\left( x,y\right) =\frac{\partial
^{2}V_{2}}{\partial y_{i}\partial y_{j}}\left( x,y\right) ,
\end{equation*}
which shows that $\left( \partial ^{2}V_{1}\left( x,y\right) /\partial
x_{i}\partial y_{j}\right) _{ij}$ is symmetric semi-definite positive.
Similarly, it is easy to see that%
\begin{equation*}
\frac{\partial ^{2}V_{2}}{\partial x_{i}\partial y_{j}}\left( x,y\right) =%
\frac{\partial T^{j}}{\partial x_{i}}\left( x,y\right) =\frac{\partial
^{2}V_{1}}{\partial x_{i}\partial x_{j}}\left( x,y\right) ,
\end{equation*}%
and therefore $\left( \partial ^{2}V_{2}\left( x,y\right) /\partial
x_{i}\partial y_{j}\right) _{ij}$ is also symmetric semi-definite positive.
\end{proof}

\bigskip

Equation~(\ref{JohnEqn}) is a well-known mathematical equation appearing in
harmonic analysis and inverse problems: it is called \emph{Fritz John's
ultrahyperbolic equation}, see John (1938), Kurusa (1991) and Ehrenpreis
(2003). It plays an important role in medical imagery because of its
connection with the so-called X-ray transform, a variant of the Radon
transform; however, to the best of the author's knowledge, its occurrence in
mechanism design problems seems to have remained unnoticed until now. Fritz
John (1938) for $d=3$, and Kurusa (1991) more generally provided rigorous
conditions under which the solutions to~(\ref{JohnEqn}) are given exactly by
functions of the form%
\begin{equation}
V_{1}\left( x,y\right) =\int_{-\infty }^{+\infty }\frac{1}{\lambda }\phi
_{\lambda }\left( \lambda x+\left( 1-\lambda \right) y\right) d\lambda
\label{xrayForm}
\end{equation}%
where $\phi _{\lambda }:\mathbb{R}^{d}\rightarrow \mathbb{R}$. Indeed,
\begin{equation*}
\frac{\partial ^{2}\phi _{\lambda }\left( \left( 1-\lambda \right) x+\lambda
y\right) }{\partial x_{i}\partial y_{j}}=\lambda \left( 1-\lambda \right)
\frac{\partial ^{2}\phi _{\lambda }}{\partial w_{i}\partial w_{j}}\left(
\left( 1-\lambda \right) x+\lambda y\right)
\end{equation*}%
is symmetric, and thus the sum is.

\bigskip

Note, however that while functions of the form~(\ref{xrayForm}) satisfy
John's equation~(\ref{JohnEqn}), they do not necessarily satisfy the
positive semidefiniteness restriction that are expressed in the
proposition.\ In order to ensure this restriction is satisfied, it is
natural to restrict to $\lambda \in \left[ 0,1\right] $ and $\phi _{\lambda
} $ convex, and thus introduce the class of solutions%
\begin{equation*}
V_{1}\left( x,y\right) =\int_{0}^{1}\frac{1}{\lambda }\phi _{\lambda }\left(
\lambda x+\left( 1-\lambda \right) y\right) d\lambda
\end{equation*}%
where $\phi _{\lambda }:\mathbb{R}^{d}\rightarrow \mathbb{R}$ are convex
functions. This yields solutions of the form%
\begin{eqnarray*}
T\left( x,y\right) &=&\int_{0}^{1}T_{\lambda }\left( x,y\right) d\lambda , \\
&&\text{where} \\
T_{\lambda }\left( x,y\right) &:&=\nabla \phi _{\lambda }\left( \lambda
x+\left( 1-\lambda \right) y\right) ,
\end{eqnarray*}%
and $T_{\lambda }\left( x,y\right) $ is called an \emph{elementary
allocation rule}.

\bigskip

Let us study the elementary allocation rules $T_{\lambda }\left( x,y\right) $%
. One has
\begin{equation*}
\nabla \phi _{\lambda }\left( w\right) =\arg \max_{z\in \mathbb{R}%
^{d}}\left\{ w^{\top }z-\phi _{\lambda }^{\ast }\left( z\right) \right\} ,
\end{equation*}%
where $\phi _{\lambda }^{\ast }\left( z\right) =\max_{w\in \mathbb{R}%
^{d}}\left\{ w^{\top }z-\phi _{\lambda }\left( w\right) \right\} $ can be
interpreted as a payment rule. Hence,%
\begin{equation*}
T_{\lambda }\left( x,y\right) =\arg \max_{z\in \mathbb{R}^{d}}\left\{
\lambda x^{\top }z+\left( 1-\lambda \right) y^{\top }z-\phi _{\lambda
}^{\ast }\left( z\right) \right\} .
\end{equation*}

Note that $\lambda x^{\top }z+\left( 1-\lambda \right) y^{\top }z$ is a
measure of the social welfare where one assigns weight $\lambda $ to agent
1, and weight $\left( 1-\lambda \right) $ to agent 2. Therefore, $T_{\lambda
}\left( x,y\right) $ is an affine welfare maximizer. Note that when one
imposes further that the set of outcomes should be finite and when $d\geq 2$%
, a theorem by Kevin Roberts (1979) asserts that the only possible
allocation rule should be the affine welfare maximizers \footnote{%
Jehiel et al. (2008) study the notion of \emph{cardinal potential} in the
context of ex-post implementability, and derive a related partial
differential equation which also bears a connection with Roberts' theorem,
although they don't make the link with Fritz John's equation.}. Removing the
restriction that the set of outcomes should be finite yields many more
solutions -- in particular, sums of affine welfare maximizers. A problem
that seems interesting is to determine if when $d\geq 2$, there are
implementable rules that are not affine welfare maximizers.

\bigskip

Let us take a very simple example:

\begin{example}
Consider a situation where two goods must be allocated between two players,
so that each player gets one good. Player $1$ has valuation $x_{1}$ for good
$1$ and $x_{2}$ for good $2$, and player $2$ has valuation $y_{1}$ for good $%
1$, and $y_{2}$ for good $2$. It is assumed that $x_{1}>x_{2}$ and $%
y_{1}<y_{2}$. Call \textquotedblleft direct\textquotedblright\ the
assignment where player $1$ gets good $1$ and player $2$ gets good $2$, and
\textquotedblleft reverse\textquotedblright\ the opposite assignment. Let $%
z_{1}$ be the probability of a direct assignment, and $z_{2}=1-z_{1}$ the
probability of a reverse assignment. The principal must decide on $z=\left(
z_{1},z_{1}\right) $ on the simplex. An implementable assignment rule is $%
z=T\left( x,y\right) $, where%
\begin{equation*}
T\left( x,y\right) =\left( \frac{x_{1}-x_{2}}{x_{1}-x_{2}+y_{2}-y_{1}},\frac{%
y_{2}-y_{1}}{x_{1}-x_{2}+y_{2}-y_{1}}\right) .
\end{equation*}%
indeed, letting
\begin{equation*}
V_{1}\left( x,y\right) =x_{1}-\left( y_{2}-y_{1}\right) \log \left(
x_{1}-x_{2}+y_{2}-y_{1}\right) ,
\end{equation*}
one verifies that $V_{1}\left( x,y\right) $ is convex in $x$, and that $%
T=\nabla _{x}V_{1}$, while letting%
\begin{equation*}
V_{2}\left( x,y\right) =y_{2}-\left( x_{1}-x_{2}\right) \log \left(
x_{1}-x_{2}+y_{2}-y_{1}\right)
\end{equation*}%
one verifies that $V_{2}\left( x,y\right) $ is convex in $y$ and that $%
T=\nabla _{y}V_{2}$.

One has $\phi _{\lambda }\left( w\right) =\max \left\{ w_{1},w_{2}\right\} $
independent of $\lambda $, so that
\begin{equation*}
\nabla \phi _{\lambda }\left( w\right) =\left( 1\left\{ w_{1}\geq
w_{2}\right\} ,1\left\{ w_{1}<w_{2}\right\} \right) ,
\end{equation*}%
and when $w=\lambda x+\left( 1-\lambda \right) y$, one has%
\begin{equation*}
T_{\lambda }\left( x,y\right) =\left(
\begin{array}{c}
1\left\{ \lambda \left( x_{1}-x_{2}\right) +\left( 1-\lambda \right) \left(
y_{1}-y_{2}\right) \geq 0\right\} , \\
1\left\{ \lambda \left( x_{1}-x_{2}\right) +\left( 1-\lambda \right) \left(
y_{1}-y_{2}\right) <0\right\}%
\end{array}%
\right)
\end{equation*}%
and thus, integrating over $\lambda \in \left[ 0,1\right] $ with respect to
the Lebesgue measure,%
\begin{equation*}
T\left( x,y\right) =\int_{0}^{1}T_{\lambda }\left( x,y\right) d\lambda
\end{equation*}%
that is%
\begin{equation*}
T\left( x,y\right) =\left( \frac{x_{1}-x_{2}}{x_{1}-x_{2}+y_{2}-y_{1}},\frac{%
y_{2}-y_{1}}{x_{1}-x_{2}+y_{2}-y_{1}}\right) .
\end{equation*}

This assignment rule can be interpreted as follows:

Draw $\lambda $ uniformly from $\left[ 0,1\right] $. Scale the valuation of
player $1$ by $\lambda $, and the valuations of player $2$ by $\left(
1-\lambda \right) $. Compute the valuation after rescaling associated with
the direct and reverse assignment, respectively. Play the assignment which
has whichever higher valuation.
\end{example}

\end{document}